\def\beq{\begin{equation}}\def\eeq{\end{equation}}
\def\l{\left(}
\def\r{\right)}
\newcommand{\Q}{\mathcal{Q}}
\newtheorem{theorem}{Theorem}[section]
\newtheorem{lemma}[theorem]{Lemma}
\newtheorem{problem}[theorem]{Problem}
\newtheorem{fact}[theorem]{Fact}
\newtheorem{corollary}[theorem]{Corollary}
\title{Efficient Approximation Algorithms for String Kernel Based Sequence Classification}
\author{Muhammad Farhan\\Department of Computer Science\\School of Science and Engineering\\Lahore University of Management Sciences\\Lahore, Pakistan\\
	\texttt{14030031@lums.edu.pk}\\ 
	\And
Juvaria Tariq\\Department of Mathematics\\School of Science and Engineering\\Lahore University of Management Sciences\\Lahore, Pakistan\\ \texttt{jtariq@emory.edu}\\
\And
Arif Zaman\\Department of Computer Science\\School of Science and Engineering\\Lahore University of Management Sciences\\Lahore, Pakistan\\
\texttt{arifz@lums.edu.pk}\\ 
\And
Mudassir  Shabbir\\Department of Computer Science\\Information Technology University\\Lahore, Pakistan\\
\texttt{mudassir.shabbir@itu.edu.pk}\\ 
\And
Imdad Ullah Khan\\Department of Computer Science\\School of Science and Engineering\\Lahore University of Management Sciences\\Lahore, Pakistan\\
\texttt{imdad.khan@lums.edu.pk}\\ 
}
\begin{document}
\maketitle
\begin{abstract}
  Sequence classification algorithms, such as SVM, require a definition of distance (similarity) measure between two sequences. A commonly used notion of similarity is the number of matches between $k$-mers ($k$-length subsequences) in the two sequences. Extending this definition, by considering two $k$-mers to match if their distance is at most $m$, yields better classification performance. This, however, makes the problem computationally much more complex. Known algorithms to compute this similarity have computational complexity that render them applicable only for small values of $k$ and $m$. In this work, we develop novel techniques to efficiently and accurately estimate the pairwise similarity score, which enables us to use much larger values of $k$ and $m$, and get higher predictive accuracy. This opens up a broad avenue of applying this classification approach to audio, images, and text sequences. Our algorithm achieves excellent approximation performance with theoretical guarantees.  In the process we solve an open combinatorial problem, which was posed as a major hindrance to the scalability of existing solutions. We give analytical bounds on quality and runtime of our algorithm and report its empirical performance on real world biological and music sequences datasets.    
\end{abstract}
\section{Introduction}\label{sec:intro}
Sequence classification is a fundamental task in pattern recognition, machine learning, and data mining with numerous applications in bioinformatics, text mining, and natural language processing. Detecting proteins homology (shared ancestry measured from similarity of their sequences of amino acids) and predicting proteins fold (functional three dimensional structure) are essential tasks in bioinformatics. Sequence classification algorithms have been applied to both of these problems with great success \cite{cheng2006machine,kuang2005profile,kuksa2009scalable,leslie2002spectrum,leslie2002mismatch,leslie2004fast,sonnenburg2005large}. Music data, a real valued signal when discretized using vector quantization of MFCC features is another flavor of sequential data \cite{tzanetakis2002musical}. Sequence classification has been used for recognizing genres of music sequences with no annotation and identifying artists from albums \cite{kuksa2008fast,kuksa2009scalable,kuksa2012generalized}. Text documents can also be considered as sequences of words from a language lexicon. Categorizing texts into classes based on their topics is another application domain of sequence classification \cite{Kuksa2011PhdThesis,kuksa2010spatial}. 

While general purpose classification methods may be applicable to sequence classification, huge lengths of sequences, large alphabet sizes, and large scale datasets prove to be rather challenging for such techniques. Furthermore, we cannot directly apply classification algorithms devised for vectors in metric spaces because in almost all practical scenarios sequences have varying lengths unless some mapping is done beforehand. In one of the more successful approaches, the variable-length sequences are represented as fixed dimensional feature vectors. A feature vector typically is the spectra (counts) of all $k$-length substrings ($k$-mers) present exactly \cite{leslie2002spectrum} or inexactly (with up to $m$ mismatches) \cite{leslie2002mismatch} within a sequence. A {\em kernel function} is then defined that takes as input a pair of feature vectors and returns a real-valued similarity score between the pair (typically inner-product of the respective spectra's). The matrix of pairwise similarity scores (the kernel matrix) thus computed is used as input to a standard {\em support vector machine} (SVM) \cite{cristianini2000introduction, vapnik1998statistical} classifier resulting in excellent classification performance in many applications \cite{leslie2002mismatch}. In this setting $k$ (the length of substrings used as bases of feature map) and $m$ (the mismatch parameter) are independent variables directly related to classification accuracy and time complexity of the algorithm. It has been established that using larger values of $k$ and $m$ improve classification performance \cite{Kuksa2011PhdThesis, kuksa2009scalable}. On the other hand, the runtime of kernel computation by the efficient trie-based algorithm \cite{leslie2002mismatch,shawe2004kernel} is $O(k^{m+1} |\Sigma|^m (|X|+|Y|))$ for two sequences $X$ and $Y$ over alphabet $\Sigma$.  

Computation of mismatch kernel between two sequences $X$ and $Y$ reduces to the following two problems. i) Given two $k$-mers $\alpha$ and $\beta$ that are at Hamming distance $d$ from each other, determine the size of intersection of $m$-mismatch neighborhoods of $\alpha$ and $\beta$ ($k$-mers that are at distance at most $m$ from both of them). ii) For $0\leq d \leq \min\{2m,k\}$ determine the number of pairs of $k$-mers $(\alpha,\beta) \in X\times Y$ such that Hamming distance between $\alpha$ and $\beta$ is $d$. In the best known algorithm \cite{kuksa2009scalable} the former problem is addressed by precomputing the intersection size in constant time for $m\leq 2$ only. While a sorting and enumeration based technique is proposed for the latter problem that has computational complexity $O(2^k (|X|+|Y|))$, which makes it applicable for moderately large values of $k$ (of course limited to $m\leq 2$ only). 

In this paper, we completely resolve the combinatorial problem (problem i) for all values of $m$. We prove a closed form expression for the size of intersection of $m$-mismatch neighborhoods that lets us precompute these values in $O(m^3)$ time (independent of $|\Sigma|$, $k$, lengths and number of sequences). For the latter problem we devise an efficient approximation scheme inspired by the theory of locality sensitive hashing to accurately estimate the number of $k$-mer pairs between the two sequences that are at distance $d$. Combining the above two we design a polynomial time approximation algorithm for kernel computation. We provide probabilistic guarantees on the quality of our algorithm and analytical bounds on its runtime. Furthermore, we test our algorithm on several real world datasets with large values of $k$ and $m$ to demonstrate that we achieve excellent predictive performance. Note that string kernel based sequence classification was previously not feasible for this range of parameters.
\section{Related Work}\label{section:background}
In the computational biology community pairwise alignment similarity scores were used traditionally as basis for classification, like the local and global alignment \cite{cristianini2000introduction,waterman1991computer}. String kernel based classification was introduced in \cite{watkins1999dynamic,haussler1999convolution}. Extending this idea, \cite{watkins1999dynamic} defined the {\em gappy $n$-gram kernel} and used it in conjunction with SVM \cite{vapnik1998statistical} for text classification. The main drawback of this approach is that runtime for kernel evaluations depends quadratically on lengths of the sequences.

An alternative model of string kernels represents sequences as fixed dimensional vectors of counts of occurrences of $k$-mers in them. These include $k$-spectrum \cite{leslie2002spectrum} and substring \cite{Vishwanathan2002fast} kernels. This notion is extended to count inexact occurrences of patterns in sequences as in mismatch \cite{leslie2002mismatch} and profile \cite{kuang2005profile} kernels. In this transformed feature space SVM is used to learn class boundaries. This approach yields excellent classification accuracies \cite{kuksa2009scalable} but computational complexity of kernel evaluation remains a daunting challenge \cite{Kuksa2011PhdThesis}. 

The exponential dimensions ($|\Sigma|^k$) of the feature space for both the $k$-spectrum kernel and $k,m$-mismatch kernel make explicit transformation of strings computationally prohibitive. SVM does not require the feature vectors explicitly; it only uses pairwise dot products between them. A trie-based strategy to implicitly compute kernel values for pairs of sequences was proposed in \cite{leslie2002spectrum} and \cite{leslie2002mismatch}. A $(k,m)$-mismatch tree is introduced which is a rooted $|\Sigma|$-ary tree of depth $k$, where each internal node has a child corresponding to each symbol in $\Sigma$ and every leaf corresponds to a $k$-mer in $\Sigma^k$. 
The runtime for computing the $k,m$ mismatch kernel value between two sequences $X$ and $Y$, under this trie-based framework, is $O((|X| + |Y|)k^{m+1}|\Sigma|^m)$, where $|X|$ and $|Y|$ are lengths of sequences. This makes the algorithm only feasible for small alphabet sizes and very small number of allowed mismatches.

The $k$-mer based kernel framework has been extended in several ways by defining different string kernels such as restricted gappy kernel, substitution kernel, wildcard kernel \cite{leslie2004fast}, cluster kernel \cite{weston2003cluster}, sparse spatial kernel \cite{kuksa2008fast}, abstraction-augmented kernel \cite{kuksa2010semi}, and generalized similarity kernel \cite{kuksa2012generalized}. For literature on large scale kernel learning and kernel approximation see \cite{Yang2012Nystrom,Bach2005Predictive,Drineas2005Nystrom,Rahimi2007Random,Rahimi2008Weighted,Williams2000Using} and references therein. 
\section{Algorithm for Kernel Computation} \label{section:algo}
In this section we formulate the problem, describe our algorithm and analyze it's runtime and quality.

\textbf{$k$-spectrum and $k,m$-mismatch kernel:} Given a sequence $X$ over alphabet $\Sigma$, the $k,m$-mismatch spectrum of $X$ is a $|\Sigma|^k$-dimensional vector, $\Phi_{k,m}(X)$ of number of times each possible $k$-mer occurs in $X$ with at most $m$ mismatches. Formally,
\begin{equation}
\label{Eq:MismatchSpectrum}
\Phi_{k,m}(X) =  \left( \Phi_{k,m}(X)[\gamma]\right)_{\gamma \in \Sigma^k} 
=     \left( \sum_{\alpha \in X} I_m(\alpha,\gamma)\right)_{\gamma \in \Sigma^k},
\end{equation}
where $I_m(\alpha,\gamma)=1$, if $\alpha$ belongs to the 
set of $k$-mers that differ from $\gamma$ by at most $m$ mismatches, i.e. the Hamming distance between $\alpha$ and $\gamma$,  $d(\alpha,\gamma)\leq m$. Note that for $m=0$, it is known as \textit{$k$-spectrum} of $X$. The {\em $k,m$-mismatch} kernel value for two sequences $X$ and $Y$ (the mismatch spectrum similarity score) \cite{leslie2002mismatch} is defined as: 
\begin{align}\label{Eq:MK}
&K(X,Y|k,m) =  \langle \Phi_{k,m}(X),\Phi_{k,m}(Y)\rangle
=  \sum_{\gamma \in \Sigma^k} \Phi_{k,m}(X)[\gamma] \Phi_{k,m}(Y)[\gamma]\notag\\
&=  \sum_{\gamma \in \Sigma^k} \sum_{\alpha \in X} I_m(\alpha,\gamma) \sum_{\beta \in Y} I_m(\beta,\gamma)
=  \sum_{\alpha \in X} \sum_{\beta \in Y} \sum_{\gamma \in \Sigma^k} I_m(\alpha,\gamma) I_m(\beta,\gamma).
\end{align}
For a $k$-mer $\alpha$, let $N_{k,m}(\alpha)=\{\gamma\in \Sigma^k: d(\alpha,\gamma)\leq m\}$ be the {\em $m$-mutational neighborhood} of $\alpha$. Then for a pair of sequences $X$ and $Y$, the $k,m$-mismatch kernel given in eq (\ref{Eq:MK}) can be equivalently computed as follows \cite{kuksa2009scalable}: 
\begin{align}\label{Eq:MK1}
K(X,Y|k,m) =&\sum_{\alpha \in X} \sum_{\beta \in Y} \sum_{\gamma \in \Sigma^k} I_m(\alpha,\gamma) I_m(\beta,\gamma)\notag{} \\
=& \sum_{\alpha \in X} \sum_{\beta \in Y} |N_{k,m}(\alpha) \cap N_{k,m}(\beta)| =  \sum_{\alpha \in X} \sum_{\beta \in Y} {\mathfrak{I}}_m(\alpha,\beta),
\end{align}
where ${\mathfrak{I}}_m(\alpha,\beta) = |N_{k,m}(\alpha) \cap N_{k,m}(\beta)|$ is the size of intersection of $m$-mutational neighborhoods of $\alpha$ and $\beta$. We use the following two facts.
\begin{fact}\label{independence}
	${\mathfrak{I}}_m(\alpha,\beta)$, the size of the intersection of $m$-mismatch neighborhoods of $\alpha$ and $\beta$, is a function of $k$, $m$, $|\Sigma|$ and $d(\alpha,\beta)$ and is independent of the actual $k$-mers $\alpha$ and $\beta$ or the actual positions where they differ. (See section \ref{subsec:closedform})
\end{fact}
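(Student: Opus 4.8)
The plan is to prove the stronger statement that the isometry group of the Hamming space $\Sigma^k$ acts transitively on ordered pairs of $k$-mers at a fixed Hamming distance, and then to observe that any Hamming isometry carries mismatch neighborhoods to mismatch neighborhoods. Concretely, I would fix two pairs $(\alpha,\beta)$ and $(\alpha',\beta')$ with $d(\alpha,\beta)=d(\alpha',\beta')=d$ and exhibit a distance-preserving bijection $\phi:\Sigma^k\to\Sigma^k$ with $\phi(\alpha)=\alpha'$ and $\phi(\beta)=\beta'$. Once such a $\phi$ is in hand, the identity $d(\phi(\gamma),\phi(\delta))=d(\gamma,\delta)$ gives that $\gamma\in N_{k,m}(\alpha)\cap N_{k,m}(\beta)$ holds exactly when $\phi(\gamma)\in N_{k,m}(\alpha')\cap N_{k,m}(\beta')$, so $\phi$ restricts to a bijection between the two intersections and hence $\mathfrak{I}_m(\alpha,\beta)=\mathfrak{I}_m(\alpha',\beta')$. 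Since $k$, $m$, and $|\Sigma|$ are held fixed throughout, this is precisely the asserted dependence on $d(\alpha,\beta)$ alone.

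To build $\phi$ I would use the two families of maps that manifestly preserve Hamming distance: permutations of the $k$ coordinate positions, and, independently at each position, permutations of the alphabet $\Sigma$ (together these generate the full isometry group, the wreath product $\mathrm{Sym}(\Sigma)\wr S_k$). The construction proceeds in two stages. First, since $\alpha$ and $\beta$ disagree in exactly $d$ positions and so do $\alpha'$ and $\beta'$, choose a coordinate permutation $\sigma$ carrying the disagreement set of $(\alpha,\beta)$ onto that of $(\alpha',\beta')$, and replace $(\alpha,\beta)$ by its image $(\tilde\alpha,\tilde\beta)$, which now disagrees in exactly the positions where $(\alpha',\beta')$ does. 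Second, work coordinate by coordinate: for each position $i$ pick a permutation $\pi_i$ of $\Sigma$ with $\pi_i(\tilde\alpha_i)=\alpha'_i$ and $\pi_i(\tilde\beta_i)=\beta'_i$. At an agreement position these two requirements collapse to the single constraint $\pi_i(\tilde\alpha_i)=\alpha'_i$, while at a disagreement position they ask a permutation to send two distinct symbols to two distinct symbols; in both cases such a $\pi_i$ exists. Setting $\phi=(\pi_1,\dots,\pi_k)\circ\sigma$ then yields the desired isometry.

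The only point needing care — and the closest thing to an obstacle — is this coordinate-wise step: one must verify that the two prescribed conditions on $\pi_i$ are simultaneously satisfiable, which hinges on disagreement positions of the source pair being mapped to disagreement positions of the target pair (guaranteed by the first stage) and on $|\Sigma|\ge 2$, so that distinct symbols remain available; everything else is bookkeeping. An alternative route, closer to the explicit formula of Section~\ref{subsec:closedform}, would be to count the $\gamma$ in the intersection directly by classifying each coordinate as an agreement or disagreement position of $(\alpha,\beta)$ and summing the contributions to $d(\alpha,\gamma)$ and $d(\beta,\gamma)$; since this count references only the numbers of agreement and disagreement positions, namely $k-d$ and $d$, together with $m$ and $|\Sigma|$, it again exhibits the claimed dependence on $k,m,|\Sigma|,d$ alone. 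I would prefer the symmetry argument for the Fact itself, reserving the explicit counting for the closed form.
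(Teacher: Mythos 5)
Your proposal is correct, but it proves Fact \ref{independence} by a genuinely different route than the paper. The paper offers no standalone symmetry argument: the Fact is justified by deferring to Section \ref{subsec:closedform}, where a direct counting argument (classify each coordinate as an agreement or a disagreement position of $(\alpha,\beta)$, then parameterize by the number $t$ of changes made among the $k-d$ agreement positions) produces an explicit closed form for $n^{ij}(\alpha,\beta)$ in terms of $k$, $d$, $i$, $j$, and $s=|\Sigma|$ alone; independence is then read off from the formula never referencing the actual strings or the locations of their disagreements. That is essentially the ``alternative route'' you sketched and set aside. Your preferred argument --- transitivity of the group generated by coordinate permutations and position-wise alphabet permutations on ordered pairs at Hamming distance $d$, plus the observation that any Hamming isometry $\phi$ restricts to a bijection from $N_{k,m}(\alpha)\cap N_{k,m}(\beta)$ onto $N_{k,m}(\phi(\alpha))\cap N_{k,m}(\phi(\beta))$ --- is sound; the two-stage construction of $\phi$ works, the coordinate-wise extension of a partial injection on at most two symbols to a permutation of $\Sigma$ is the only delicate point and you handled it, and note that you never actually need the (true, but inessential) claim that these maps generate the full isometry group $\mathrm{Sym}(\Sigma)\wr S_k$. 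Comparing the two: your argument is more conceptual and more general --- it shows at one stroke that \emph{any} quantity defined purely in terms of Hamming distances to $\alpha$ and $\beta$ (in particular each $n^{qr}(\alpha,\beta)$ separately) is invariant, with no computation, and it establishes the Fact independently of the later theorem, which is tidy since the Fact is invoked (to define ${\cal I}_i$ in the kernel decomposition \eqref{Eq:MK2}) before the closed form appears in the text. The paper's route buys what your argument cannot: the counting is needed anyway, since the algorithm must actually evaluate ${\cal I}_d$ (the $O(m^3)$ precomputation of Corollary \ref{thmIntersection}), and transitivity alone certifies that ${\cal I}_d$ is well defined without giving any means of computing it.
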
 
\begin{fact}\label{min2mk}
	If $d(\alpha,\beta) > 2m$, then ${\mathfrak{I}}_m(\alpha,\beta) = 0$. 
\end{fact}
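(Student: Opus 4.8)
The plan is to prove the contrapositive by invoking the triangle inequality, using the fact that the Hamming distance $d$ is a metric on $\Sigma^k$. Concretely, I would show that the existence of any $k$-mer in the intersection $N_{k,m}(\alpha) \cap N_{k,m}(\beta)$ already forces $d(\alpha,\beta) \leq 2m$; contrapositively, if $d(\alpha,\beta) > 2m$ the intersection must be empty and hence ${\mathfrak{I}}_m(\alpha,\beta) = 0$.

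First I would recall that $d(\alpha,\gamma)$, defined as the number of positions at which two $k$-mers disagree, satisfies the triangle inequality $d(\alpha,\beta) \leq d(\alpha,\gamma) + d(\gamma,\beta)$ for all $\alpha,\beta,\gamma \in \Sigma^k$. This is the only ingredient with any content, and it follows position-by-position: at every index where $\alpha$ and $\beta$ disagree, $\gamma$ must differ from at least one of them, so that index is counted in $d(\alpha,\gamma)$ or in $d(\gamma,\beta)$; summing over all $k$ positions gives the inequality.

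Next I would take an arbitrary $\gamma \in N_{k,m}(\alpha) \cap N_{k,m}(\beta)$. By the definition of the $m$-mutational neighborhood, this membership is equivalent to $d(\alpha,\gamma) \leq m$ and $d(\beta,\gamma) \leq m$ holding simultaneously, and the triangle inequality then yields $d(\alpha,\beta) \leq d(\alpha,\gamma) + d(\gamma,\beta) \leq m + m = 2m$. Thus, whenever $d(\alpha,\beta) > 2m$, no such $\gamma$ can exist, so $N_{k,m}(\alpha) \cap N_{k,m}(\beta) = \emptyset$ and ${\mathfrak{I}}_m(\alpha,\beta) = |N_{k,m}(\alpha) \cap N_{k,m}(\beta)| = 0$. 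There is essentially no obstacle here: the whole statement is just the metric property of Hamming distance, and the only care needed is to verify that property cleanly for $k$-mers over an arbitrary alphabet $\Sigma$.
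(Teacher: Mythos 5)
Your proof is correct: the paper states this fact without proof, and the triangle-inequality argument you give (any $\gamma$ in both neighborhoods forces $d(\alpha,\beta) \leq d(\alpha,\gamma) + d(\gamma,\beta) \leq 2m$) is precisely the standard justification the paper implicitly relies on. Your position-by-position verification of the metric property of Hamming distance is also sound, so there is nothing to add.
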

In view of the above two facts we can rewrite the kernel value (\ref{Eq:MK1}) as 
\beq \label{Eq:MK2} K(X,Y|k,m) =  \sum_{\alpha \in X} \sum_{\beta \in Y} {\mathfrak{I}}_m(\alpha,\beta) = \sum_{i=0}^{\min\{2m,k\}} M_i\cdot {\cal I}_i,\eeq where ${\cal I}_i={\mathfrak{I}}_m(\alpha,\beta)$ when $d(\alpha,\beta)=i$ and $M_i$ is the number of pairs of $k$-mers $(\alpha,\beta)$ such that $d(\alpha,\beta)=i$, where $\alpha\in X$ and $\beta\in Y$. Note that bounds on the last summation follows from Fact \ref{min2mk} and the fact that the Hamming distance between two $k$-mers is at most $k$. Hence the problem of kernel evaluation is reduced to computing $M_i$'s and evaluating ${\cal I}_i$'s.

\subsection{Closed form for Intersection Size}\label{subsec:closedform} Let $N_{k,m}(\alpha,\beta)$ be the intersection of $m$-mismatch neighborhoods of $\alpha$ and $\beta$ i.e. $$N_{k,m}(\alpha,\beta) = N_{k,m}(\alpha) \cap N_{k,m}(\beta).$$ As defined earlier $|N_{k,m}(\alpha,\beta)| = {\mathfrak{I}}_m(\alpha,\beta)$. Let $N_q(\alpha) = \{\gamma \in \Sigma^k : d(\alpha,\gamma) = q\}$ be the set of $k$-mers that differ with $\alpha$ in exactly $q$ indices. Note that $N_q(\alpha) \cap N_r(\alpha) = \emptyset$ for all $q\neq r$. Using this and defining $n^{qr}(\alpha,\beta)=|N_q(\alpha) \cap N_r(\beta)|$,  $$N_{k,m}(\alpha,\beta) = \bigcup_{q=0}^m \bigcup_{r=0}^m N_q(\alpha) \cap N_r(\beta)\;\;\; \text{ and } \;\;\; {\mathfrak{I}}_m(\alpha,\beta)=\sum_{q=0}^m \sum_{r=0}^m  n^{qr}(\alpha,\beta).$$
Hence we give a formula to compute $n^{ij}(\alpha,\beta)$. Let $s =|\Sigma|$.
\begin{theorem}
	Given two $k$-mers $\alpha$ and $\beta$ such that $d(\alpha,\beta)=d$, we have that $$n^{ij}(\alpha,\beta)= \sum_{t=0}^{\frac{i+j-d}{2}}{2d - i-j+2t\choose d-(i-t)} {d \choose i+j-2t-d} (s -2)^{i+j-2t-d}  {k-d\choose t} (s-1)^t.$$ 
\end{theorem}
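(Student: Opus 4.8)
The plan is to establish the formula by a direct combinatorial counting argument that conditions on a single well-chosen parameter. Fix $\alpha$ and $\beta$ with $d(\alpha,\beta)=d$ and partition the $k$ coordinate positions into the set $D$ of the $d$ positions where $\alpha$ and $\beta$ disagree and the set $A$ of the remaining $k-d$ positions where they agree. By Fact~\ref{independence} the count depends only on the sizes $|D|=d$ and $|A|=k-d$, so it suffices to enumerate the $k$-mers $\gamma$ with $d(\alpha,\gamma)=i$ and $d(\beta,\gamma)=j$ relative to one fixed such partition.

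First I would classify each position of $\gamma$ by how it relates to $\alpha$ and $\beta$. On a position in $A$ (where $\alpha$ and $\beta$ share a symbol), $\gamma$ either matches that symbol, contributing $0$ to both distances, or takes one of the other $s-1$ symbols, contributing $1$ to both. On a position in $D$ there are three types: $\gamma$ equals $\alpha$ here (adding $1$ to $d(\beta,\gamma)$ only), $\gamma$ equals $\beta$ here (adding $1$ to $d(\alpha,\gamma)$ only), or $\gamma$ equals one of the remaining $s-2$ symbols (adding $1$ to both). Let $t$ be the number of positions in $A$ at which $\gamma$ differs from the common symbol, and let $a,b,c$ be the numbers of positions in $D$ of the three respective types, so that $a+b+c=d$. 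Reading off the two Hamming distances gives the linear relations $i = b+c+t$ and $j = a+c+t$.

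Next I would invert this system for $a,b,c$ in terms of the single free parameter $t$, obtaining $c = i+j-2t-d$, $a = d-(i-t)$, and $b = d-(j-t)$ (one checks $a+b+c=d$). The number of $\gamma$ realizing a fixed $t$ then factors into independent choices: selecting the type of each position in $D$ contributes the multinomial $\binom{d}{a,b,c}$ together with a factor $(s-2)^{c}$ for the free symbol choices at the $c$ type-$c$ positions (the type-$a$ and type-$b$ positions are forced), while selecting the $t$ mismatching positions of $A$ contributes $\binom{k-d}{t}(s-1)^{t}$. Writing the multinomial as $\binom{d}{c}\binom{d-c}{a}$ and substituting $c=i+j-2t-d$, $d-c = 2d-i-j+2t$, and $a=d-(i-t)$ reproduces exactly the two binomial coefficients $\binom{2d-i-j+2t}{d-(i-t)}$ and $\binom{d}{i+j-2t-d}$ of the claimed expression; summing over admissible $t$ then yields the formula.

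Finally I would settle the range of summation. Nonnegativity of $c$ forces $t\le \frac{i+j-d}{2}$, which is the stated upper limit, whereas nonnegativity of $a$ and $b$ and the bound $t\le k-d$ need not be imposed explicitly, since the relevant binomial coefficients vanish outside the valid range and the sum may harmlessly start at $t=0$. I expect the only genuine obstacle to be the bookkeeping in the second and third steps: correctly identifying $t$ as the right conditioning variable and verifying that the inverted relations land precisely on the arguments appearing in the two binomials. Once that algebra is pinned down, the factorization into independent position choices is routine; a minor caveat is that the $(s-2)^{c}$ factor tacitly assumes $s\ge 2$, which holds for any nontrivial alphabet.
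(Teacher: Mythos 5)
Your proof is correct and follows essentially the same route as the paper's: both arguments split the positions into the $d$ disagreement positions and the $k-d$ agreement positions, condition on the number $t$ of altered agreement positions (contributing $\binom{k-d}{t}(s-1)^t$), and solve the same linear system --- your $(a,b,c)$ are exactly the paper's $(l_3,l_2,l_1)$ --- to get the product of binomial factors for the disagreement block. The only cosmetic difference is that you count the strings $\gamma\in N_i(\alpha)\cap N_j(\beta)$ directly via a multinomial coefficient $\binom{d}{a,b,c}$, while the paper phrases the same count as the number of ways to make $i$ changes in $\alpha$ and $j$ changes in $\beta$ that yield a common string, after first warming up on the cases $d=0$ and $d=k$.
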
 
\begin{proof}
$n^{ij}(\alpha,\beta)$ can be interpreted as the number of ways to make $i$ changes in $\alpha$ and $j$ changes in $\beta$ to get the same string. For clarity, we first deal with the case when we have $d(\alpha,\beta)=0$, i.e both strings are identical. We wish to find $n^{ij}(\alpha,\beta)=|N_i(\alpha) \cap N_j(\beta)|$. It is clear that in this case $i=j$, otherwise making $i$ and $j$ changes to the same string will not result in the same string. Hence $n^{ij}= {k\choose i} (s-1)^i$. Second we consider $\alpha,\beta$ such that $d(\alpha,\beta)=k$. Clearly $k\geq i$ and $k\geq j$. Moreover, since both strings do not agree at any index, character at every index has to be changed in at least one of $\alpha$ or $\beta$. This gives $k\leq i+j$.

Now for a particular index $p$, $\alpha[p]$ and $\beta[p]$ can go through any one of the following three changes. Let $\alpha [p]=x$, $\beta [p]=y$. (I) Both $\alpha [p]$ and $\beta [p]$ may change from $x$ and $y$ respectively to some character $z$. Let $l_1$ be the count of indices going through this type of change. (II) $\alpha [p]$ changes from $x$ to $y$, call the count of these $l_2$. (III) $\beta [p]$ changes from $y$ to $x$, let this count be $l_3$. It follows that $$i = l_1+l_2 \;\;\; ,\;\;\; j = l_1+l_3, \;\;\; ,\;\;\; l_1+l_2+l_3 = k.$$ This results in $l_1 = i+j-k$. Since $l_1$ is the count of indices at which characters of both strings change, we have $s-2$ character choices for each such index and ${k \choose i+j- k}$ possible combinations of indices for $l_1$. From the remaining $l_2 + l_3 = 2k-i-j$ indices, we choose $l_2=k-j$  indices in ${2k-i-j \choose k-j}$ ways and change the characters at these indices of $\alpha$ to characters of $\beta$ at respective indices. Finally, we are left with only $l_3$ remaining indices and we change them according to the definition of $l_3$. Thus the total number of strings we get after making $i$ changes in $\alpha$ and $j$ changes in $\beta$ is $$\left(s-2\right)^{i+j-k} {k \choose i+j- k} {2k-i-j \choose k-j}.$$ 
Now we consider general strings $\alpha$ and $\beta$ of length $k$ with $d(\alpha, \beta)=d$. Without loss of generality assume that they differ in the first $d$ indices. We parameterize the system in terms of the number of changes that occur in the last $k-d$ indices of the strings i.e let $t$ be the number of indices that go through a change in last $k-d$ indices. Number of possible such changes is  \beq\label{same} {k-d\choose t} (s-1)^t.\eeq 

Lets call the first $d$-length substrings of both strings $\alpha'$ and $\beta'$. There are $i-t$ characters to be changed in $\alpha'$ and $j-t$ in $\beta'$. As reasoned above, we have $ d\leq (i-t)+(j-t)\implies t\leq \frac{i+j-d}{2}$. In this setup we get $i-t = l_1+l_2 $,  $j-t = l_1+l_3 $,  $l_1+l_2+l_3 = d $ and $l_1 = (i-t)+(j-t)-d$. We immediately get that for a fixed $t$, the total number of resultant strings after making $i-t$ changes in $\alpha'$ and $j-t$ changes in $\beta'$ is \beq\label{diff} {2d - (i-t) -(j-t)\choose d-(i-t)} {d \choose (i-t)+(j-t) - d} (s-2)^{(i-t)+(j-t) - d}.\eeq For a fixed $t$, every substring counted in \eqref{same}, every substring counted in \eqref{diff} gives a required string obtained after $i$ and $j$ changes in $\alpha$ and $\beta$ respectively. The statement of the theorem follows.
\end{proof}
\begin{corollary}\label{thmIntersection}
	 Runtime of computing ${\cal I}_d$ is $O(m^3)$, independent of $k$ and $|\Sigma|$.
\end{corollary}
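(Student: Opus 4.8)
The plan is to bound the total number of arithmetic operations needed to evaluate the triple sum defining ${\cal I}_d$, and to show that the apparent dependence on $k$ and $|\Sigma|$ collapses into a negligible precomputation. First I would recall that, for a pair with $d(\alpha,\beta)=d$,
$${\cal I}_d = \sum_{i=0}^{m}\sum_{j=0}^{m} n^{ij}(\alpha,\beta),$$
and substitute the closed form from the Theorem, exhibiting ${\cal I}_d$ as a sum over the three indices $i$, $j$, and $t$. Since $0\le i\le m$, $0\le j\le m$, and $0\le t\le \tfrac{i+j-d}{2}\le m$, each index ranges over at most $m+1$ values, so the sum has $O(m^3)$ terms. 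It then suffices to argue that each term is computable in $O(1)$ time.

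Second I would dispatch the two ``structural'' binomials $\binom{2d-i-j+2t}{\,d-(i-t)\,}$ and $\binom{d}{\,i+j-2t-d\,}$. Because $2t\le i+j-d$ and $d\le 2m$, their top arguments are bounded by $2d\le 4m$, so a single Pascal-triangle table of $\binom{a}{b}$ for $a,b=O(m)$, built once in $O(m^2)$ time, supplies all of them by table lookup in $O(1)$ each.

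Third, and this is the delicate point, I would isolate the only factors that reference $k$ and $|\Sigma|$: the binomial $\binom{k-d}{t}$ and the powers $(s-1)^t$ and $(s-2)^{\,i+j-2t-d}$. The index $t$ and every exponent never exceed $2m$, so only $O(m)$ distinct such values are ever needed. Each family is generated in $O(m)$ total time by a one-step recurrence, e.g. $\binom{k-d}{t}=\binom{k-d}{t-1}\cdot\frac{k-d-t+1}{t}$ and repeated multiplication for the powers, and tabulated before the main loop. Folding the $O(m^2)$ and $O(m)$ precomputations into the $O(m^3)$ term evaluations leaves a total of $O(m^3)$, with no surviving factor of $k$ or $|\Sigma|$.

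The main obstacle will be exactly this third step. A naive evaluation of $\binom{k-d}{t}$ as a product of $t$ factors inside the inner loop would cost $O(m)$ per term and inflate the bound to $O(m^4)$, while accounting for the bit-complexity of these (potentially $k$-sized) integers would smuggle a dependence on $k$ back in. The resolution is to observe that only $O(m)$ such quantities occur, to precompute them via the recurrences above, and to adopt the unit-cost arithmetic model so that each operation is $O(1)$; this is what makes the final bound genuinely independent of $k$ and $|\Sigma|$.
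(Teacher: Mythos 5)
Your proposal is correct and follows the same basic accounting as the paper: the paper's one-line proof observes that ${\cal I}_d = \sum_{q=0}^m \sum_{r=0}^m n^{qr}(\alpha,\beta)$ has $O(m^2)$ summands and asserts that each $n^{qr}$ is computable in $O(m)$, which is exactly your ``$O(m^3)$ terms, $O(1)$ each'' count viewed one level up. Where you go beyond the paper is in actually justifying the per-term cost: the paper's claim that $n^{qr}$ takes $O(m)$ time silently assumes that each summand of the $t$-sum --- two small binomials, the binomial $\binom{k-d}{t}$, and two powers of $s-1$ and $s-2$ --- is available in constant time, and your precomputation scheme (Pascal table for the $O(m)$-sized binomials, one-step recurrences for the $k$- and $s$-dependent factors, unit-cost arithmetic) is precisely what makes that assumption legitimate; as you note, evaluating $\binom{k-d}{t}$ from scratch inside the loop would degrade the bound to $O(m^4)$. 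Your explicit flagging of the unit-cost model is also the honest way to read ``independent of $k$ and $|\Sigma|$,'' since in bit-complexity terms the intermediate quantities do grow with $k$. So: same approach, but your version closes a small gap the paper leaves implicit.
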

 This is so, because if $d(\alpha,\beta)=d$, ${\cal I}_d = \sum\limits_{q=0}^m \sum\limits_{r=0}^m n^{qr}(\alpha,\beta)$ and $n^{qr}(\alpha,\beta)$ can be computed in $O(m)$.
\subsection{Computing $M_i$}\label{subsec:computeMd}
Recall that given two sequences $X$ and $Y$, $M_i$ is the number of pairs of $k$-mers $(\alpha,\beta)$ such that $d(\alpha,\beta)=i$, where $\alpha\in X$ and $\beta\in Y$. Formally, the problem of computing $M_i$ is as follows: 
\begin{problem}\label{problem:1}
	Given $k$, $m$, and two sets of $k$-mers $S_X$ and $S_Y$ (set of $k$-mers extracted from the sequences $X$ and $Y$ respectively) with $|S_X|=n_X$ and $|S_Y| = n_Y$. Compute $$M_i = |\{(\alpha,\beta) \in S_X\times S_Y : d(\alpha,\beta) = i\}| \;\; \text{ for } 0\leq i \leq \min\{2m,k\}.$$
\end{problem}
Note that the brute force approach to compute $M_i$ requires $O(n_X\cdot n_Y\cdot k)$ comparisons. 
Let $\Q _k(j)$ denote the set of all $j$-sets of $\{1,\ldots,k\}$ (subsets of indices). For $\theta \in \Q _k(j)$ and a $k$-mer $\alpha$, let $\alpha|_{\theta}$ be the $j$-mer obtained by selecting the characters at the $j$ indices in $\theta$. Let $f_{\theta}(X,Y)$ be the number of pairs of $k$-mers in $S_X\times S_Y$ as follows; $$f_{\theta}(X,Y) = |\{(\alpha,\beta) \in S_X\times S_Y : d(\alpha|_{\theta},\beta|_{\theta}) = 0\}|.$$ We use the following important observations about $f_{\theta}$.
\begin{fact} For $0\leq i \leq k$ and $\theta \in \Q _k(k-i)$, if $d(\alpha|_{\theta},\beta|_{\theta}) = 0$, then $d(\alpha,\beta) \leq i$.
\end{fact}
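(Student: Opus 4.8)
The plan is to read off the conclusion directly from the definitions by a simple counting (complement-size) argument: the set $\theta$ pins down a block of positions on which $\alpha$ and $\beta$ are forced to agree, and all possible disagreements are then confined to the complementary positions, whose number is exactly $i$.

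First I would unpack the hypothesis. Since $\theta \in \Q_k(k-i)$, by definition $\theta$ is a $(k-i)$-subset of the index set $\{1,\ldots,k\}$, so $|\theta| = k-i$. The assumption $d(\alpha|_{\theta},\beta|_{\theta}) = 0$ says that the projected $(k-i)$-mers coincide, i.e. $\alpha[p] = \beta[p]$ for every index $p \in \theta$. Next I would recall that $d(\alpha,\beta)$ is, by definition, the cardinality of the disagreement set $D = \{\, p \in \{1,\ldots,k\} : \alpha[p] \neq \beta[p] \,\}$. The key observation is that $D$ is disjoint from $\theta$: no index of $\theta$ can lie in $D$, precisely because $\alpha$ and $\beta$ agree on all of $\theta$. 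Hence $D \subseteq \{1,\ldots,k\}\setminus\theta$.

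The conclusion then follows from a size count: the complement $\{1,\ldots,k\}\setminus\theta$ has cardinality $k - |\theta| = k - (k-i) = i$, so $d(\alpha,\beta) = |D| \leq i$, as claimed.

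Honestly, there is no serious obstacle here; the statement is an immediate consequence of the definitions of Hamming distance and of the projection $\alpha|_{\theta}$. The only point that warrants a moment of care is the index bookkeeping — matching the chosen subset size $|\theta| = k-i$ against the target bound $i$ through the complement — which is exactly what makes the projection of size $k-i$ the natural object for detecting pairs at distance at most $i$, and foreshadows why $f_{\theta}$ will be the right quantity to aggregate when estimating the $M_i$.
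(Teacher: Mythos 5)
Your proof is correct: the disagreement set of $\alpha$ and $\beta$ is disjoint from $\theta$, hence contained in the complement of $\theta$, which has exactly $i$ elements. The paper states this fact without proof, treating it as immediate from the definitions, and your complement-counting argument is precisely the reasoning being implicitly invoked.
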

\begin{fact}\label{fact6} For $0\leq i \leq k$ and $\theta \in \Q_k(k-i)$, $f_{\theta}(X,Y)$ can be computed in $O(kn\log n)$ time.
\end{fact}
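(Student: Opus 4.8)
The plan is to observe that $d(\alpha|_{\theta},\beta|_{\theta})=0$ holds precisely when the two projected $(k-i)$-mers $\alpha|_{\theta}$ and $\beta|_{\theta}$ are identical as strings, so that $f_{\theta}(X,Y)$ is exactly the number of pairs in $S_X\times S_Y$ whose restrictions to the $k-i$ coordinates of $\theta$ coincide. This turns the computation into a ``group equal projections together and count cross pairs'' problem, which is handled efficiently by sorting. Here I take $n=n_X+n_Y$, the total number of $k$-mers.

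First I would, for each $k$-mer in $S_X$ and in $S_Y$, form its projection $\alpha|_{\theta}$, a string of length $k-i\leq k$; each projection costs $O(k)$, so assembling the multiset of all $n$ projected strings takes $O(kn)$. While doing so I would attach to every projected string a one-bit tag recording whether it came from $S_X$ or $S_Y$, so the origin survives the reordering. Next I would sort the $n$ tagged strings lexicographically. A comparison-based sort uses $O(n\log n)$ comparisons, and each comparison of two strings of length at most $k$ costs $O(k)$, giving a sort time of $O(kn\log n)$. After sorting, all copies of any fixed projected string occupy one contiguous block, and a single left-to-right scan (with an $O(k)$ adjacency test to detect block boundaries) delimits these blocks in $O(kn)$ total time.

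To finish, within a block containing $a$ strings tagged $X$ and $b$ strings tagged $Y$, every $X$-string matches every $Y$-string, so the block contributes exactly $a\cdot b$ to $f_{\theta}$; summing $a\cdot b$ over all blocks yields $f_{\theta}(X,Y)$. The overall cost is $O(kn)+O(kn\log n)+O(kn)=O(kn\log n)$, as claimed.

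The main point requiring care, and the only place the argument can quietly go wrong, is the tally in the counting step: within an equal-projection block I must count only the \emph{cross} pairs with one endpoint in $S_X$ and the other in $S_Y$, namely $a\cdot b$, rather than all ${a+b\choose 2}$ unordered pairs, since $f_{\theta}$ ranges over $S_X\times S_Y$. Retaining the origin tag on each projected string and maintaining separate counts $a$ and $b$ per block resolves this cleanly. (If one wishes to remove the per-comparison factor of $k$, radix sort on length-$(k-i)$ strings over $\Sigma$ gives a bound of the form $O(k(n+|\Sigma|))$, but the comparison-sort analysis already establishes the stated $O(kn\log n)$.)
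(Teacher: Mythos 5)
Your proof is correct and follows essentially the same approach as the paper: sort the $k$-mers by their restriction to the indices in $\theta$ (comparison sort with $O(k)$-cost comparisons giving $O(kn\log n)$, or counting sort giving $O(k(n+|\Sigma|))$), then count the matching cross pairs in a single linear scan. Your only deviation—merging both sets into one tagged list and summing $a\cdot b$ per block, rather than scanning the two separately sorted lists—is an immaterial implementation variant, and your explicit care to count only $S_X\times S_Y$ pairs is exactly what the paper's scan implicitly does.
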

This can be done by first lexicographically sorting the $k$-mers in each of $S_X$ and $S_Y$ by the indices in $\theta$. The pairs in $S_X\times S_Y$ that are the same at indices in $\theta$ can then be enumerated in one linear scan over the sorted lists. Let $n=n_X+n_Y$, runtime of this computation is $O(k(n+|\Sigma|))$ if we use counting sort (as in \cite{kuksa2009scalable}) or $O(kn\log n)$ for mergesort (since $\theta$ has $O(k)$ indices.) Since this procedure is repeated many times, we refer to this as the \texttt{SORT-ENUMERATE} subroutine. We define \beq\label{Eq:Fiformula} F_i(X,Y) = \sum_{\theta \in \Q_k(k-i)} f_{\theta}(X,Y).\eeq 
\begin{lemma} \beq \label{FalternateForm} F_i(X,Y) = \sum_{j=0}^i {k-j \choose k-i}M_j.\eeq
\end{lemma}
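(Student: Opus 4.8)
The plan is to prove the identity by double counting: I would interchange the order of summation so that the problem reduces to a purely combinatorial count of index-subsets for each fixed pair of $k$-mers, and then regroup the pairs by their Hamming distance.

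First I would unfold the definition in eq.~\eqref{Eq:Fiformula} and substitute the definition of $f_{\theta}$ as a sum of indicators over pairs, obtaining the double sum
$$F_i(X,Y) = \sum_{\theta \in \Q_k(k-i)} \sum_{(\alpha,\beta) \in S_X \times S_Y} \mathbf{1}[d(\alpha|_{\theta},\beta|_{\theta}) = 0].$$
Since everything is finite, I would swap the two sums so that the outer sum ranges over pairs and the inner sum counts the admissible subsets $\theta$:
$$F_i(X,Y) = \sum_{(\alpha,\beta) \in S_X \times S_Y} \bigl|\{\theta \in \Q_k(k-i) : d(\alpha|_{\theta},\beta|_{\theta}) = 0\}\bigr|.$$

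The crux is to evaluate this inner cardinality for a fixed pair $(\alpha,\beta)$. If $d(\alpha,\beta)=j$, then $\alpha$ and $\beta$ agree on exactly $k-j$ indices and disagree on the remaining $j$. The event $d(\alpha|_{\theta},\beta|_{\theta})=0$ holds precisely when $\theta$ selects only indices on which $\alpha$ and $\beta$ agree, i.e.\ when $\theta$ is a $(k-i)$-subset of the set of $k-j$ agreement indices. Hence the inner count equals $\binom{k-j}{k-i}$, and, crucially, it depends on the pair only through its Hamming distance $j$ (consistent with the earlier Fact: agreement on some $\thein \Q_k(k-i)$ forces $d(\alpha,\beta)\le i$).

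Finally I would group the pairs by their Hamming distance. Since the number of pairs at distance $j$ is exactly $M_j$, this gives
$$F_i(X,Y) = \sum_{j=0}^{k} \binom{k-j}{k-i} M_j.$$
The coefficient $\binom{k-j}{k-i}$ vanishes whenever $k-j < k-i$, that is whenever $j>i$, so all terms with $j>i$ drop out and the sum truncates at $j=i$, yielding the claimed formula. I do not anticipate a serious obstacle: the argument is a clean Fubini-style interchange, and the only point requiring a word of care is the last step, namely justifying that the upper summation limit can be taken as $i$ rather than $k$, which follows immediately from the vanishing of the binomial coefficient.
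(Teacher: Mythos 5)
Your proof is correct and takes essentially the same approach as the paper: the paper's argument likewise counts, for each pair $(\alpha,\beta)$ at distance $j$, the number of $(k-i)$-subsets $\theta$ lying inside the $k-j$ agreement positions, concluding that each such pair is counted exactly $\binom{k-j}{k-i}$ times in $F_i(X,Y)$. Your write-up only makes the Fubini-style interchange and the vanishing of $\binom{k-j}{k-i}$ for $j>i$ explicit, which the paper leaves implicit.
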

\begin{proof}
Let $(\alpha,\beta)$ be a pair that contributes to $M_j$, i.e. $d(\alpha,\beta) = j$. Then for every $\theta \in \Q_k(k-i)$ that has all indices within the $k-j$ positions where $\alpha$ and $\beta$ agree, the pair $(\alpha,\beta)$ is counted in $f_{\theta}(X,Y)$. The number of such $\theta$'s are ${k-j \choose k-i}$, hence $M_j$ is counted ${k-j \choose k-i}$ times in $F_i(X,Y)$, yielding the required equality. 
\end{proof}
\begin{corollary}\label{corrMi}
	$M_i$ can readily be computed as: $ M_i = F_i(X,Y) - \sum\limits_{j=0}^{i-1} {k-j \choose k-i}M_j$.
\end{corollary}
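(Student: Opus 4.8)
The plan is to derive the corollary as a direct algebraic consequence of the Lemma, which expresses $F_i(X,Y)$ as a lower-triangular linear combination of $M_0,\ldots,M_i$. Concretely, I would start from the identity
$$F_i(X,Y) = \sum_{j=0}^i \binom{k-j}{k-i} M_j$$
established in equation \eqref{FalternateForm}, and split off the top index $j=i$ from the summation so that $M_i$ appears by itself.

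The one point that must be checked is the diagonal coefficient. At $j=i$ the binomial coefficient is $\binom{k-i}{k-i}=1$, so the $j=i$ term contributes exactly $M_i$ rather than a nontrivial multiple of it. This unit-diagonal structure is what lets us isolate $M_i$ by pure transposition, with no division by a coefficient. Writing
$$F_i(X,Y) = \binom{k-i}{k-i} M_i + \sum_{j=0}^{i-1} \binom{k-j}{k-i} M_j = M_i + \sum_{j=0}^{i-1} \binom{k-j}{k-i} M_j$$
and moving the trailing sum to the other side yields the claimed formula
$$M_i = F_i(X,Y) - \sum_{j=0}^{i-1} \binom{k-j}{k-i} M_j.$$

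I do not anticipate a genuine obstacle, since all the combinatorial content already resides in the Lemma; the corollary is merely the inversion of a triangular system. The only conceptual remark worth making explicit is that this recurrence is solvable by forward substitution: for $i=0$ the empty sum gives $M_0 = \binom{k}{k}M_0 = F_0$, and thereafter each $M_i$ is determined by $F_i$ together with the already-computed values $M_0,\ldots,M_{i-1}$. Consequently all $\min\{2m,k\}+1$ quantities $M_i$ can be recovered once the $F_i$ are known, which is precisely what makes the reduction to computing the $F_i$ via \texttt{SORT-ENUMERATE} useful for the kernel evaluation in \eqref{Eq:MK2}.
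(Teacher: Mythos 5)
Your proof is correct and matches the paper's intent exactly: the corollary is an immediate rearrangement of the lemma $F_i(X,Y)=\sum_{j=0}^i\binom{k-j}{k-i}M_j$, isolating the $j=i$ term whose coefficient $\binom{k-i}{k-i}=1$ is unity (which is why the paper states it without a separate proof). Your additional observation that the unit-diagonal triangular structure permits forward substitution from $M_0=F_0$ onward is precisely how Algorithm \ref{algo:approxKernel} uses the recurrence.
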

By definition, $F_i(X,Y)$ can be computed with ${k\choose k-i} ={k\choose i}$ $f_{\theta}$ computations. Let $t=\min\{2m,k\}$. $K(X,Y|k,m)$ can be evaluated by \eqref{Eq:MK2} after computing $M_i$ (by \eqref{FalternateForm}) and ${\cal I}_i$ (by Corollary \ref{thmIntersection}) for $0\leq i\leq t$. The overall complexity of this strategy thus is $$\left(\sum_{i=0}^t {k\choose i} (k-i)(n\log n + n)\right) + O(n) = O(k\cdot 2^{k-1} \cdot (n\log n)).$$
We give our algorithm to approximate $K(X,Y|k,m)$, it's explanation followed by it's analysis.
\begin{algorithm}[h]
	\caption{: Approximate-Kernel($S_X$,$S_Y$,$k$,$m$,$\epsilon$,$\delta$,$B$)}
	\label{algo:approxKernel}
	\begin{algorithmic}[1]
		\State ${\cal I}, M' \gets \Call{zeros}{t+1}$
		\State $\sigma \gets \epsilon\cdot \sqrt{\delta}$
		\State Populate ${\cal I}$ using Corollary \ref{thmIntersection}
		
		\For{$i =0$ to $t$}
		\State $\mu_F \gets 0$
		\State $iter \gets 1$
		\State $var_F \gets \infty$	
		\While{$var_F > \sigma^2 \wedge iter < B$} \label{line9}
		\State $\theta \gets \Call{random}{{k\choose k-i}}$
		\State $\mu_F \gets \dfrac{\mu_F \cdot (iter - 1) + \Call{sort-enumerate}{S_X,S_Y,k,\theta}}{iter}$ \Comment{Application of Fact \ref{fact6}}
		\State $var_F \gets \Call{variance}{\mu_F,var_F,iter}$ \Comment{Compute online variance}
		\State $iter \gets iter + 1$
		\EndWhile
		\State $F'[i] \gets \mu_F\cdot {k\choose k-i}$
		\State $M'[i]\gets F'[i]$
		\For{$j = 0$ to $i-1$} \Comment{Application of Corollary \ref{corrMi}}
		\State $M'[i] \gets M'[i] - {k-j \choose k-i}\cdot M'[j]$
		\EndFor
		\EndFor
		\State $K' \gets \Call{sumproduct}{M',{\cal I}}$ \Comment{Applying Equation \eqref{Eq:MK2}}
		\State \Return $K'$
	\end{algorithmic}
\end{algorithm} 

Algorithm \ref{algo:approxKernel} takes $\epsilon, \delta \in (0,1)$, and $B \in \mathbb{Z}^+$ as input parameters; the first two controls the accuracy of estimate while $B$ is an upper bound on the sample size. We use \eqref{Eq:Fiformula} to estimate $F_i=F_i(X,Y)$ with an online sampling algorithm, where we choose $\theta\in\Q_k(k-i)$ uniformly at random and compute the online mean and variance of the estimate for $F_i$. We continue to sample until the variance is below the threshold ($\sigma^2=\epsilon^2 \delta$) or the sample size reaches the upper bound $B$. We scale up our estimate by the population size and use it to compute $M_i'$ (estimates of $M_i$) using Corollary \ref{corrMi}. These $M_i'\;$'s together with the precomputed exact values of ${\cal I}_i$'s are used to compute our estimate, $K'(X,Y|k,m,\sigma,\delta,B)$, for the kernel value using \eqref{Eq:MK2}. First we give an analytical bound on the runtime of Algorithm \ref{algo:approxKernel} then we provide guarantees on it's performance.
%
\begin{theorem}\label{runtime}
	Runtime of Algorithm \ref{algo:approxKernel} is bounded above by $O(k^2 n\log n)$.
\end{theorem}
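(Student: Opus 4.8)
The plan is to isolate the single dominant operation — the \texttt{SORT-ENUMERATE} subroutine — bound its per-call cost, count how many times it can be invoked across the whole execution, and then check that every other line of Algorithm~\ref{algo:approxKernel} contributes only a lower-order or $n$-independent term. By Fact~\ref{fact6}, one call of \texttt{SORT-ENUMERATE} on $S_X, S_Y$ (of total size $n$) for an index set $\theta \in \Q_k(k-i)$ runs in $O(kn\log n)$ time, since $\theta$ has $k-i \le k$ indices and a single merge-sort-plus-linear-scan pass suffices.

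Next I would bound the number of such calls. Each \texttt{SORT-ENUMERATE} call sits inside the \textbf{while} loop of line~\ref{line9}, whose guard $iter < B$ caps the number of iterations at $B-1 < B$; since $B$ is a fixed input parameter, this is $O(1)$ per value of $i$. This inner loop is executed once for each $i$ in the outer \textbf{for} loop, and $i$ ranges over $0,\dots,t$ with $t = \min\{2m,k\} \le k$, giving $t+1 = O(k)$ outer iterations. Hence the total number of \texttt{SORT-ENUMERATE} calls is at most $(t+1)B = O(k)$, and the total time spent in them is $O(k)\cdot O(kn\log n) = O(k^2 n\log n)$.

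It then remains to verify that the remaining work is dominated by this term. Populating ${\cal I}$ via Corollary~\ref{thmIntersection} costs $O(m^3)$ per entry over $t+1 = O(m)$ entries, i.e.\ $O(m^4)$, which is independent of $n$ and of the number and lengths of the sequences. The online mean and variance updates (the $\mu_F$, $var_F$, and \texttt{variance} steps) are $O(1)$ per sample, hence subsumed by the \texttt{SORT-ENUMERATE} cost they accompany. Recovering each $M'[i]$ from the $F'$-values via Corollary~\ref{corrMi} uses an inner loop of at most $i \le t$ arithmetic steps, totalling $O(t^2)=O(k^2)$ over all $i$; and the closing \texttt{SumProduct} over the $t+1$ entries of $M'$ and ${\cal I}$ is $O(k)$. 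All of these are $n$-independent additive contributions, absorbed into $O(k^2 n\log n)$ once $n$ is large relative to $k$ and $m$, as holds for the sequence lengths under consideration.

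The only genuine subtlety — and the step I would be most careful about — is the count of \texttt{SORT-ENUMERATE} invocations: the bound $O(k^2 n\log n)$ holds precisely because the sampling is capped at the constant $B$, so the $\binom{k}{i}$ distinct choices of $\theta$ (which drove the exact algorithm to its $O(k\,2^{k-1}n\log n)$ cost) are never all enumerated. I would also double-check that $t=\min\{2m,k\}$ is correctly treated as $O(k)$ regardless of $m$, since that is exactly what keeps the outer loop linear in $k$ and hence the whole estimate quadratic rather than depending on $m$.
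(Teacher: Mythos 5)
Your proposal is correct and follows essentially the same argument as the paper: bound the number of \texttt{SORT-ENUMERATE} (i.e.\ $f_{\theta}$) calls by $tB$ with $B$ a constant and $t\leq k$, charge each call $O(kn\log n)$ via Fact~\ref{fact6}, and conclude $O(k^2 n\log n)$. Your additional accounting of the $n$-independent work (populating ${\cal I}$, the variance updates, and the recurrence for $M'[i]$) is a more careful version of what the paper leaves implicit, but it is not a different approach.
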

\begin{proof} 
Observe that throughout the execution of the algorithm there are at most $tB$ computations of $f_{\theta}$, which by Fact \ref{fact6} needs $O(kn\log n)$ time. Since $B$ is an absolute constant and $t\leq k$, we get that the total runtime of the algorithm is $O(k^2n\log n)$. Note that in practice the while loop in line \ref{line9} is rarely executed for $B$ iterations; the deviation is within the desired range much earlier. \end{proof}
Let $K' = K'(X,Y|k,m,\epsilon,\delta,B)$ be our estimate (output of Algorithm \ref{algo:approxKernel}) for $K = K(X,Y|k,m)$.
\begin{theorem}\label{unbiasedEst}
	$K'$ is an unbiased estimator of the true kernel value, i.e. $E(K') = K$.
\end{theorem}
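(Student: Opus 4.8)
The plan is to push unbiasedness backwards through the three computational stages of Algorithm~\ref{algo:approxKernel}: the sampling estimate $F'[i]$ of $F_i(X,Y)$, the deconvolution producing $M'[i]$, and the final inner product $K'=\sum_{i=0}^t M'[i]\,\mathcal{I}_i$. Since the $\mathcal{I}_i$ are computed exactly (Corollary~\ref{thmIntersection}) and enter $K'$ linearly, by linearity of expectation it suffices to prove $E(M'[i])=M_i$ for every $i$, which in turn will follow once we establish $E(F'[i])=F_i(X,Y)$.

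First I would verify the sampling stage. Fix $i$ and let $\theta$ be drawn uniformly from $\Q_k(k-i)$. By the definition \eqref{Eq:Fiformula} of $F_i$ as a sum of $f_\theta$ over all $\binom{k}{k-i}$ index subsets,
\[
E\big(f_\theta(X,Y)\big)=\frac{1}{\binom{k}{k-i}}\sum_{\theta\in\Q_k(k-i)}f_\theta(X,Y)=\frac{F_i(X,Y)}{\binom{k}{k-i}}.
\]
Each iteration of the while loop draws such a $\theta$ and $\mu_F$ is the running average of the resulting $f_\theta$ values, so the scaled quantity $F'[i]=\binom{k}{k-i}\,\mu_F$ targets $F_i(X,Y)$; for a sample of fixed size the equality $E(F'[i])=F_i(X,Y)$ is immediate from the display above.

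Next I would run an induction on $i$ to transfer unbiasedness to $M'[i]$. The base case is $M'[0]=F'[0]$, and since the relation of Corollary~\ref{corrMi} gives $M_0=F_0(X,Y)$, we get $E(M'[0])=M_0$. For the inductive step, the algorithm sets $M'[i]=F'[i]-\sum_{j=0}^{i-1}\binom{k-j}{k-i}M'[j]$, exactly mirroring Corollary~\ref{corrMi}; applying linearity of expectation together with $E(F'[i])=F_i(X,Y)$ and the inductive hypotheses $E(M'[j])=M_j$ for $j<i$ reproduces the identity of the Corollary and yields $E(M'[i])=M_i$. A final application of linearity to $K'=\sum_{i=0}^t M'[i]\,\mathcal{I}_i$ then gives $E(K')=\sum_{i=0}^t M_i\,\mathcal{I}_i=K$ by \eqref{Eq:MK2}.

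The delicate point — and what I expect to be the main obstacle — is that the number of while-loop iterations is not fixed but is a data-dependent stopping time driven by the online variance test. A sample mean evaluated at a random stopping time is in general a \emph{biased} estimator of the population mean, because the event that the running variance has dropped below $\sigma^2$ is correlated with the very values being averaged. The clean argument above therefore really establishes $E(F'[i])=F_i(X,Y)$ only when the per-$i$ sample size is regarded as fixed (for instance when the loop runs to the bound $B$, or when one analyzes the estimator separately for each fixed iteration count). Reconciling exact unbiasedness with the adaptive stopping rule is the part I would scrutinize most carefully: either one argues that the intended estimator is the fixed-sample-size average, so that the variance test governs only accuracy and not the expectation, or one must quantify the bias the stopping rule introduces. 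I would proceed under the fixed-sample-size reading, under which the preceding chain of equalities delivers $E(K')=K$ without further assumptions.
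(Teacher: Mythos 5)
Your proof takes essentially the same route as the paper's: the paper establishes Lemma~\ref{lem:unbiasedKernel} ($E(M_i')=M_i$) by the same induction on $i$, using $E(f_{\theta_r}(X,Y))=F_i(X,Y)/\binom{k}{k-i}$ for a uniformly random $\theta_r$ together with Corollary~\ref{corrMi}, and then concludes by linearity of expectation over $K'=\sum_i \mathcal{I}_i M_i'$ exactly as you do. The adaptive-stopping subtlety you flag is genuine, but the paper's own proof silently makes the same fixed-sample-size assumption (it writes $E(\mu_F)=E(f_{\theta_r}(X,Y))$ even though $iter$ is a data-dependent stopping time), so your treatment, including the explicit caveat, is if anything more careful than the paper's.
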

\begin{proof} For this we need the following result, whose proof is deferred.
\begin{lemma}\label{lem:unbiasedKernel} $E(M_i') = M_i$.
\end{lemma}
By Line 17 of Algorithm \ref{algo:approxKernel}, $E(K') =E( \sum_{i=0}^{t} {\cal I}_i M_i').$ Using the fact that ${\cal I}_i$'s are constants and Lemma \ref{lem:unbiasedKernel} we get that $$E(K') = \sum_{i=0}^{t} {\cal I}_i E(M_i')= \sum_{i=0}^{\min\{2m,k\}} {\cal I}_i M_i = K.$$\end{proof} 
\begin{theorem}\label{error}
For any $0< \epsilon, \delta < 1$, Algorithm \ref{algo:approxKernel} is an $(\epsilon {\cal I}_{max}, \delta)-$additive approximation algorithm, i.e. $Pr(|K-K'| \geq \epsilon {\cal I}_{max} ) < \delta$, where ${\cal I}_{max} = \max_{i}\{{\cal I}_i\}$.
\end{theorem}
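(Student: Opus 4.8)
The plan is to establish the additive approximation guarantee by controlling the error in each estimate $M_i'$ and then propagating these errors through the linear combination defining $K'$. I would first recall from Theorem \ref{unbiasedEst} and Lemma \ref{lem:unbiasedKernel} that each $M_i'$ is unbiased, so the deviation $|K-K'|$ is a zero-mean quantity. The key structural fact is that $K' = \sum_{i=0}^t {\cal I}_i M_i'$ with the ${\cal I}_i$ deterministic constants, so $K - K' = \sum_{i=0}^t {\cal I}_i (M_i - M_i')$. Bounding this by $\epsilon {\cal I}_{max}$ with probability at least $1-\delta$ naturally reduces to controlling the variance of each $M_i'$, which is exactly what the stopping criterion of the \texttt{while} loop enforces: sampling continues until $var_F \leq \sigma^2 = \epsilon^2\delta$.

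The central step I would carry out is a Chebyshev-type argument. Since the algorithm estimates $F_i$ by an online sample mean $\mu_F$ of i.i.d.\ \texttt{sort-enumerate} evaluations and terminates once the empirical variance of that mean drops below $\sigma^2$, the variance of the estimator $F_i'$ (and hence, after the linear back-substitution in Corollary \ref{corrMi} that recovers $M_i'$ from the $F_j'$) is controlled by $\sigma^2 = \epsilon^2\delta$. I would make precise how the variance threshold on $\mu_F$ translates into a variance bound on $M_i'$, then apply Chebyshev's inequality: $Pr(|M_i - M_i'| \geq \epsilon) \leq \mathrm{Var}(M_i')/\epsilon^2 \leq \delta$. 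Combining with the unbiasedness and the weighting by ${\cal I}_i \leq {\cal I}_{max}$, I would conclude that the total deviation exceeds $\epsilon {\cal I}_{max}$ with probability below $\delta$.

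The main obstacle I anticipate is the back-substitution in Corollary \ref{corrMi}: because $M_i'$ is computed as $F_i' - \sum_{j<i}\binom{k-j}{k-i}M_j'$, the errors in the lower-index estimates $M_j'$ feed into $M_i'$, so the variances do not stay isolated per index. The binomial coefficients $\binom{k-j}{k-i}$ can be large, so a naive propagation would blow up the error bound by a factor depending on $k$, which would be incompatible with the clean $\epsilon {\cal I}_{max}$ guarantee. I expect the resolution to hinge on arguing that the stopping rule is applied to the scaled quantity so that it is the variance of $M_i'$ (not merely of $F_i'$) that is driven below $\epsilon^2\delta$, or on a more careful accounting showing the accumulated error telescopes favorably; pinning down exactly which variance the threshold $\sigma^2$ controls, and ensuring the per-term bound $\mathrm{Var}(M_i')\leq \epsilon^2\delta$ holds after the correction step, is where the real care is needed.

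Finally, I would assemble the pieces: a union bound or a direct variance bound on $K-K' = \sum_i {\cal I}_i(M_i-M_i')$ gives $\mathrm{Var}(K') \leq {\cal I}_{max}^2 \sum_i \mathrm{Var}(M_i')$, and one application of Chebyshev at threshold $\epsilon {\cal I}_{max}$ yields the stated $(\epsilon {\cal I}_{max},\delta)$ guarantee. The cleanest route, which I would aim for, is to show $\mathrm{Var}(K'/{\cal I}_{max}) \leq \epsilon^2\delta$ directly from the sampling threshold and then invoke Chebyshev once, rather than per-index, since that avoids an extra factor of $t$ from a union bound.
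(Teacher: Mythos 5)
Your scaffolding matches the paper's proof: unbiasedness of $K'$ (Lemma \ref{lem:unbiasedKernel}), a variance bound on $K'$, and a single application of Chebyshev's inequality. But the heart of the argument --- the paper's Lemma \ref{lem:kernelVariance}, which asserts $Var(K') \leq \delta(\epsilon\,{\cal I}_{max})^2$ --- is precisely the step you leave open. You correctly locate the obstacle (errors in the lower-index $M_j'$ feed into $M_i'$ through the back-substitution of Corollary \ref{corrMi}, with potentially large binomial coefficients), but you only conjecture two possible resolutions without establishing either, and the concrete steps you do write down fail to close the gap. The inequality $Var(K') \leq {\cal I}_{max}^2 \sum_i Var(M_i')$ is not valid in general, because the $M_i'$ are correlated: each $M_i'$ is a deterministic function of $F_0',\ldots,F_i'$, so the terms $Cov(M_i',M_j')$ cannot simply be dropped. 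The per-index bound $Var(M_i') \leq \epsilon^2\delta$ is never derived from the algorithm --- the threshold $\sigma^2$ is applied to the sample mean $\mu_F$ of the $f_\theta$ values, not to $M_i'$. And your fallback of per-index Chebyshev plus a union bound yields deviation $(t+1)\epsilon\,{\cal I}_{max}$ with failure probability $(t+1)\delta$, losing the factor of $t+1$ in both parameters that you yourself note must be avoided.

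The missing idea, which is how the paper resolves exactly the difficulty you flag, is to use the identity $F_t' = \sum_{j=0}^{t} \binom{k-j}{k-t} M_j'$ for the \emph{largest} index $t$. Expanding $Var(F_t')$ by the linear-combination formula (Fact \ref{fact:varLinComb}), every coefficient $\binom{k-j}{k-t}$ is at least $1$, so the bracketed quantity $\sum_i Var(M_i') + 2\sum_{i}\sum_{j>i} Cov(M_i',M_j')$ that appears when you pull ${\cal I}_{max}^2$ out of the expansion of $Var(K')$ is dominated term-by-term by $Var(F_t')$; the while-loop stopping rule for index $t$ then gives $Var(F_t') \leq \sigma^2 = \epsilon^2\delta$, hence $Var(K') \leq {\cal I}_{max}^2\sigma^2 = \delta(\epsilon\,{\cal I}_{max})^2$, and one Chebyshev application at threshold $\epsilon\,{\cal I}_{max} \geq \sqrt{Var(K')/\delta}$ finishes the proof with no factor of $t$ anywhere. (For completeness: the paper's own argument implicitly assumes the covariances are nonnegative and identifies the empirical stopping variance with the true variance of the scaled estimator; those loosenesses belong to the paper, but the term-by-term domination by $Var(F_t')$ is the idea your proposal lacks, and without it --- or some substitute --- the claimed bound does not follow.)
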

Note that these are very loose bounds, in practice we get approximation far better than these bounds. Furthermore, though ${\cal I}_{max}$ could be large, but it is only a fraction of one of the terms in summation for the kernel value $K(X,Y|k,m)$.
\begin{proof} 
Let $F'_i$ be our estimate for $F_i\l X,Y\r = F_i$. We use the following bound on the variance of $K'$ that is proved later.
\begin{lemma}\label{lem:kernelVariance}
	$Var(K') \leq \delta(\epsilon\cdot{\cal I}_{max})^2.$
\end{lemma}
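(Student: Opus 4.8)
The plan is to reduce the variance of $K'$ to the variances of the independent per-distance estimators $F'_i$, and then to cap each of these by the sampling threshold $\sigma^2 = \epsilon^2\delta$ enforced by the while-loop of Algorithm \ref{algo:approxKernel}. The starting point is that the only randomness in the algorithm enters through the estimates $F'_i$: conditioned on the drawn subsets $\theta$, every subsequent operation — forming $M'_i$ from $F'_i$ via the triangular recursion of Corollary \ref{corrMi}, and assembling $K' = \sum_{i=0}^t {\cal I}_i M'_i$ by \eqref{Eq:MK2} — is a fixed linear map with the exact constants ${\cal I}_i$. Moreover the outer \texttt{for} loop draws a fresh batch of uniform $\theta \in \Q_k(k-i)$ for each $i$, so $F'_0,\dots,F'_t$ are mutually independent random variables.

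First I would make the dependence of $K'$ on the $F'_i$ explicit. Inverting the unit lower-triangular system $F_i = \sum_{j\le i}\binom{k-j}{k-i}M_j$ of \eqref{FalternateForm} expresses each $M'_i$, and hence $K' = \sum_i {\cal I}_i M'_i$, as a single deterministic linear combination $K' = \sum_{i=0}^t c_i F'_i$ whose coefficients $c_i$ are built from the ${\cal I}_i$ and the binomial inversion weights. By independence of the $F'_i$ this gives $Var(K') = \sum_{i=0}^t c_i^2\,Var(F'_i)$. An alternative that avoids writing the inverse explicitly is Minkowski's inequality, $\sqrt{Var(K')}\le \sum_i {\cal I}_i\sqrt{Var(M'_i)} \le {\cal I}_{max}\sum_i\sqrt{Var(M'_i)}$, propagating the variance one recursion step at a time.

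Next I would invoke the stopping rule. The inner \texttt{while} loop terminates only once the online variance of the running estimate has fallen below $\sigma^2 = \epsilon^2\delta$ (or the sample cap $B$ is reached), so each sampling variance that feeds into $F'_i$ is at most $\sigma^2$. Substituting this per-coordinate bound and bounding every ${\cal I}_i$ by ${\cal I}_{max}$ should collapse the weighted sum to the single clean factor, yielding $Var(K') \le {\cal I}_{max}^2\,\sigma^2 = \delta(\epsilon\,{\cal I}_{max})^2$, as claimed.

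The hard part will be the bookkeeping of the variance as it propagates through the recursion of Corollary \ref{corrMi}: that recursion carries binomial weights such as $\binom{k}{k-1}=k$, so the coefficients $c_i$ genuinely grow, and the crux is to show that after capping each sampling variance by $\sigma^2$ and each intersection size by ${\cal I}_{max}$ the accumulated combinatorial weights still fold into the stated factor ${\cal I}_{max}^2$ — which is exactly the place where the bound is loose, in the sense flagged after the theorem statement. Two secondary points also need care: the gap between the empirical online variance used as the termination test and the true $Var(F'_i)$, and the branch in which the loop exits because $iter$ reaches its cap $B$ before the variance threshold is met.
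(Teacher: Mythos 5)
Your starting observations are sound (the randomness enters only through the $F'_i$, which are mutually independent since each $i$ draws fresh $\theta$'s, and $K'$ is a fixed linear map of them), but the route you sketch cannot reach the stated bound, and the step you defer as ``the hard part'' is exactly where it breaks. Inverting the unit lower-triangular system \eqref{FalternateForm} gives $M'_i=\sum_{j\le i}(-1)^{i-j}\binom{k-j}{k-i}F'_j$, so the coefficient of $F'_j$ in your representation $K'=\sum_j c_jF'_j$ is $c_j=\sum_{i\ge j}^{t}(-1)^{i-j}\binom{k-j}{k-i}{\cal I}_i$; already for $j=t-1$ this equals ${\cal I}_{t-1}-(k-t+1){\cal I}_t$, of magnitude $\Theta\!\left(k\,{\cal I}_{max}\right)$ in general. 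Because the $F'_j$ are independent, variances add with \emph{squared} coefficients and no cancellation is available: $Var(K')=\sum_j c_j^2\,Var(F'_j)$, and capping each $Var(F'_j)$ by $\sigma^2$ yields only $\sigma^2\sum_j c_j^2$, which carries binomial factors and is not bounded by ${\cal I}_{max}^2\sigma^2$. (The one exception, $F'_0$, is exact since $\Q_k(k)$ is a singleton, but the terms $j\ge 1$ already spoil the bound.) So the hoped-for ``collapse'' is not a bookkeeping matter: along this path the bound provably degrades by the inversion weights, and no argument within your framework can fold them into the single factor ${\cal I}_{max}^2$.

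The paper's proof takes a different comparison that never inverts the system. It expands $Var(K')$ via Fact \ref{fact:varLinComb} in terms of $Var(M'_i)$ and $Cov(M'_i,M'_j)$, pulls out ${\cal I}_{max}^2$, and then dominates the bracketed sum by $Var(F'_t)$, using that in the expansion \eqref{Eq:VarourF} of $F'_t=\sum_{j\le t}\binom{k-j}{k-t}M'_j$ every coefficient $\binom{k-j}{k-t}$ is at least $1$; the stopping rule is then invoked \emph{once}, for the single quantity $Var(F'_t)\le\sigma^2$ — only the last estimator $i=t$ is used, not per-$i$ variance caps. (That domination step tacitly treats the covariance terms as nonnegative, consistent with the authors' remark that the bound is loose, but it entirely avoids the coefficient blow-up that sinks your plan.) Your two secondary caveats — the gap between the empirical online variance and the true $Var(F'_i)$, and the exit branch where $iter$ reaches $B$ — are legitimate and are glossed over by the paper as well; they are not, however, the obstacle. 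The fatal issue in your proposal is the uncontrolled growth of the coefficients $c_j$ under independence, which the paper's $Var(F'_t)$-domination argument is specifically structured to sidestep.
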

By Lemma \ref{lem:unbiasedKernel} we have $E(K') = K$, hence by Lemma \ref{lem:kernelVariance}, $Pr[|K' - K|] \geq \epsilon {\cal I}_{max}$ is equivalent to $Pr[|K' - E(K')|] \geq \frac{1}{\sqrt{\delta}}\sqrt{Var(K')}$. By the Chebychev's inequality, this latter probability is at most $\delta$. Therefore, Algorithm \ref{algo:approxKernel} is an $(\epsilon {\cal I}_{max}, \delta)-$additive approximation algorithm. \end{proof} 

\begin{proof}(Proof of Lemma \ref{lem:unbiasedKernel})
	We prove it by induction on $i$. The base case ($i=0$) is  true as we compute $M'[0]$ exactly, i.e. $M'[0] = M[0]$. Suppose $E(M'_j) = M_j$ for $0\leq j \leq i-1$. Let $iter$ be the number of iterations for $i$, after execution of Line 10 we get $$F'[i] = \mu_F  {k\choose k-i} = \dfrac{\sum_{r=1}^{iter} f_{\theta_r}(X,Y)}{iter} {k\choose k-i},$$ where $\theta_r$ is the random $(k-i)$-set chosen in the $r$th iteration of the while loop. Since $\theta_r$ is chosen uniformly at random we get that  \beq\label{Eq:expectedOurF} E(F'[i]) = E(\mu_F) {k\choose k-i} = E(f_{\theta_r}(X,Y)) {k\choose k-i} = \dfrac{F_i(X,Y)}{{k\choose k-i}}  {k\choose k-i}.\eeq 
	
	After the loop on Line 15 is executed we get that $E(M'[i]) = F_i(X,Y) - \sum\limits_{j=0}^{i-1} {k-j \choose k-i}E(M_j')$. Using $E(M_j')=M_j$ (inductive hypothesis) in \eqref{FalternateForm} we get that $E(M_i') = M_i$. 
\end{proof}

\begin{proof} (Proof of Lemma \ref{lem:kernelVariance}) After execution of the while loop in Algorithm \ref{algo:approxKernel}, we have $F_i' =  \sum\limits_{j=0}^i {k-j \choose k-i}M_j'.$
	We use the following fact that follows from basic calculations. 
	\begin{fact}\label{fact:varLinComb}
		Suppose $X_0,\ldots,X_t$ are random variables and let $S= \sum_{i=0}^t a_iX_i$, where $a_0,\ldots,a_t$ are constants. Then $$Var(S) = \sum_{i=0}^t a_i^2Var(X_i) + 2\sum_{i=0}^t\sum_{j=i+1}^t a_ia_jCov(X_i,X_j).$$
	\end{fact}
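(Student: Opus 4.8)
The plan is to prove this standard variance identity directly from the definition of variance, using only linearity of expectation. First I would recall that $Var(S) = E\big[(S - E(S))^2\big]$ and, since expectation is linear and the $a_i$ are constants, that $E(S) = \sum_{i=0}^t a_i E(X_i)$. Subtracting these gives the centered form $S - E(S) = \sum_{i=0}^t a_i\big(X_i - E(X_i)\big)$, which is the workhorse of the argument.

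Next I would square this centered sum and expand it as a full double sum over index pairs:
\beq
\big(S - E(S)\big)^2 = \sum_{i=0}^t \sum_{j=0}^t a_i a_j \big(X_i - E(X_i)\big)\big(X_j - E(X_j)\big).
\eeq
Taking expectations of both sides and moving the expectation inside this finite double sum by linearity, each summand becomes $a_i a_j E\big[(X_i - E(X_i))(X_j - E(X_j))\big] = a_i a_j Cov(X_i, X_j)$, directly by the definition of covariance (and noting $Cov(X_i,X_i) = Var(X_i)$). This yields the symmetric intermediate form $Var(S) = \sum_{i=0}^t \sum_{j=0}^t a_i a_j Cov(X_i, X_j)$.

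Finally I would split the double sum into its diagonal and off-diagonal parts. The $i = j$ terms contribute exactly $\sum_{i=0}^t a_i^2 Var(X_i)$. For the off-diagonal terms I would invoke the symmetry $Cov(X_i, X_j) = Cov(X_j, X_i)$ to pair the $(i,j)$ and $(j,i)$ contributions, which collapses the entire off-diagonal sum into twice the sum over $i < j$, namely $2\sum_{i=0}^t \sum_{j=i+1}^t a_i a_j Cov(X_i, X_j)$. Adding the two pieces gives the claimed identity. The only step demanding any care — and the closest thing to an obstacle in this otherwise elementary calculation — is this bookkeeping of the symmetric off-diagonal terms, i.e. justifying the factor of $2$ and the restriction of the inner index to $j > i$; every other step is just linearity of expectation applied to a finite sum.
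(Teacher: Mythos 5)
Your proof is correct: the expansion of $E\big[(S-E(S))^2\big]$ into the symmetric double sum of covariances, followed by splitting off the diagonal and pairing the symmetric off-diagonal terms, is exactly the ``basic calculations'' the paper alludes to, since the paper states this fact without proof. Nothing is missing, and the one step you flag as needing care (the factor of $2$ from $Cov(X_i,X_j)=Cov(X_j,X_i)$) is handled correctly.
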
 
	Using fact \ref{fact:varLinComb} and 
	definitions of ${\cal I}_{max}$ and $\sigma$ we get that 
	$$Var(K')=\sum_{i=0}^t {{\cal I}_i}^2 Var( M_i') +2\sum_{i=0}^t\sum_{j=i+1}^t{\cal I}_i{\cal I}_j Cov(M_i', M_j')$$
	$$\leq {\cal I}_{max}^2 \left[\sum_{i=0}^t Var(M_i')  +2\sum_{i=0}^t\sum_{j=i+1}^t Cov( M_i', M_j')\right] \leq {\cal I}_{max}^2 Var(F_t') \leq {\cal I}_{max}^2 \sigma^2 =\delta(\epsilon\cdot{\cal I}_{max})^2.$$
The last inequality follows from the following relation derived from definition of $F_i'$ and Fact \ref{fact:varLinComb}.
	\beq\label{Eq:VarourF}
	Var(F_t')=\sum_{i=0}^t {k-i\choose k-t}^2 Var(M_i')  +2\sum_{i=0}^t\sum_{j=i+1}^t{k-i\choose k-t}{k-j\choose k-t} Cov( M_i', M_j').\eeq
\end{proof}
\section{Evaluation}\label{section:experimental}
We study the performance of our algorithm in terms of runtime, quality of kernel estimates and predictive accuracies on standard benchmark sequences datasets (Table \ref{tab:datasets}) . For the range of parameters feasible for existing solutions, we generated kernel matrices both by algorithm of \cite{kuksa2009scalable} (exact) and our algorithm (approximate).  These experiments are performed on an Intel Xeon machine with ($8$ Cores, $2.1$ GHz and $32$ GB RAM) using the same experimental settings as in \cite{kuksa2009scalable,kuksa2010spatial, kuksa2014efficient}. Since our algorithm is applicable for significantly wider range of $k$ and $m$, we also report classification performance with large $k$ and $m$. For our algorithm we used $B\in \{300,500\}$ and $\sigma \in\{0.25,0.5\}$ with no significant difference in results as implied by the theoretical analysis. In all reported results $B=300$ and $\sigma=0.5$. In order to perform comparisons, for a few combinations of parameters we generated exact kernel matrices of each dataset on a much more powerful machine (a cluster of $20$ nodes, each having $24$ CPU's with $2.5$ GHz speed and $128GB$ RAM). Sources for datasets and source code are available at \footnote{\url{https://github.com/mufarhan/sequence_class_NIPS_2017}}. 
\begin{table}[H]
	\vskip-.1in
	\centering\caption {Datasets description} \label{tab:datasets} 
	\begin{tabular}{llllll}
		\hline
		Name & Task & Classes&Seq.&Av.Len. & Evaluation\\ \hline\hline
		Ding-Dubchak \cite{ding2001multi} &  protein fold recognition & $27$ & $694$ & $169$ & 10-fold CV \\ \hline
		SCOP \cite{conte2000scop,weston2005semi}&  protein homology detection & $54$  & $7329$ & $308$ & 54 binary class.\\ \hline	
		Music \cite{li2003comparative,tzanetakis2002musical}  &  music genre recognition & $10$  & $1000$ & $2368$ & 5-fold CV \\ \hline	
		Artist20 \cite{ellis2007classifying,kuksa2014efficient}  &  artist identification & $20$  & $1413$ & $9854$ & 6-fold CV \\ \hline	
		ISMIR \cite{kuksa2014efficient}  & music genre recognition & $6$  & $729$ & $10137$ & 5-fold CV\\ \hline	
	\end{tabular} 
\vskip-.1in
\end{table}
{\bf \em Running Times:} We report difference in running times for kernels generation in Figure \ref{fig:runtime1}. Exact kernels are generated using code provided by authors of \cite{kuksa2009scalable,kuksa2012generalized} for $8\leq k\leq 16$ and $m=2$ only. We achieve significant speedups for large values of $k$ (for $k=16$ we get one order of magnitude gains in computational efficiency on all datasets). The running times for these algorithms are $O(2^k n)$ and $O(k^2 n \log n)$ respectively. We can use larger values of $k$ without an exponential penalty, which is visible in the fact that in all graphs, as $k$ increases the growth of running time of the exact algorithm is linear (on the log-scale), while that of our algorithm tends to taper off.
\begin{figure}[H]
\centering
\vskip-.12in
\caption{Log scaled plot of running time of approximate and exact kernel generation for $m=2$} \label{fig:runtime1}
\includegraphics[scale=.83]{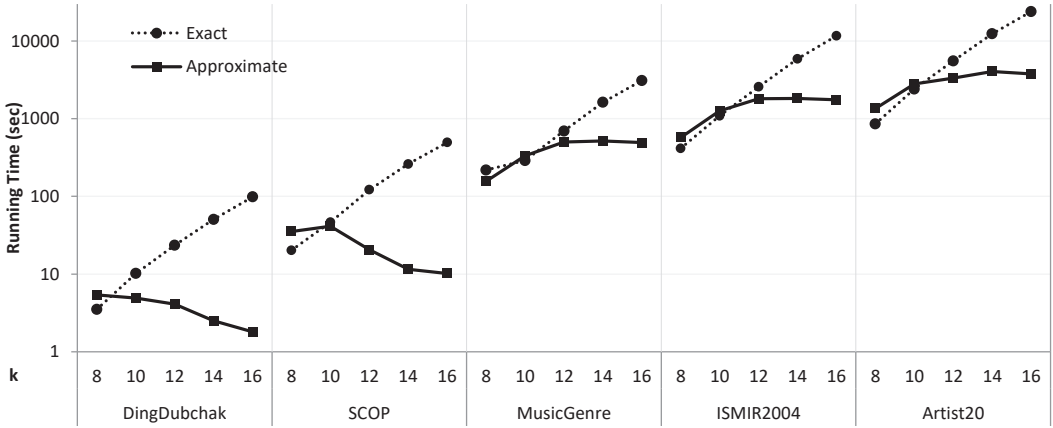}
\vskip-.12in
\end{figure}
{\bf \em Kernel Error Analysis:} We show that despite reduction in runtimes, we get excellent approximation of kernel matrices. In Table \ref{tab:ErrorAnalysis} we report point-to-point error analysis of the approximate kernel matrices. We compare our estimates with exact kernels for $m=2$. For $m> 2$ we report statistical error analyses. More precisely, we evaluate differences with  principal submatrices of the exact kernel matrix. These principal submatrices are selected by randomly sampling $50$ sequences and computing their pairwise kernel values. We report errors for four datasets; the fifth one, not included for space reasons, showed no difference in error. From Table \ref{tab:ErrorAnalysis} it is evident that our empirical performance is significantly more precise than the theoretical bounds proved on errors in our estimates.
\begin{table}
 \centering
\vskip-.05in
\caption {Mean absolute error (MAE) and root mean squared error (RMSE) of approximate kernels. For $m>2$ we report average MAE and RMSE of three random principal submatrices of size $50 \times 50$} \label{tab:ErrorAnalysis}
\begin{tabular}{|c|c|c||c|c||c|c||c|c|} \hline
	& \multicolumn{2}{c||}{Music Genre}&\multicolumn{2}{c||}{ISMIR}&\multicolumn{2}{c||}{Artist20}&\multicolumn{2}{c|}{SCOP} \\ \hline
	$(k,m) $&RMSE&MAE&RMSE&MAE&RMSE&MAE&RMSE&MAE \\ \hline\hline
	$(10,2)$&$0$&$0$&$0$&$0$&$ 0 $&$ 0 $&$1.3$E$-6$&$ 9.0$E$-8$ \\ \hline
	$(12,2)$&$0$&$0$&$0$&$0$&$ 0 $&$ 0 $&$1.4$E$-6$&$ 1.0$E$-8$ \\ \hline
	$(14,2)$&$2.0$E$-8$&$0$&$2.0$E$-8$&$0$&$3.3$E$-8$&$1.3$E$-8$&$2.9$E$-6$&$ 1.3$E$-8 $ \\ \hline
	$(16,2)$&$1.3$E$-8$&$0$&$4.0$E$-8$&$3.3$E$-9$&$			$&$			$&$2.9$E$-6$&$1.0$E$-8 $ \\ \hline
	$(12,6)$&$1.97$E$-5$&$8.5$E$-7$&$	$&$	$&$			$&$			$&$ 2.4$E$-4	$&$	1.8$E$-5 $ \\ \hline
\end{tabular}
\vskip-.05in
\end{table} 

{\bf \em Prediction Accuracies:} We compare the outputs of SVM on the exact and approximate kernels using the publicly available SVM implementation LIBSVM \cite{CC01a}. We computed exact kernel matrices by brute force algorithm for a few combinations of parameters for each dataset on the much more powerful machine. Generating these kernels took days; we only generated to compare classification performance of our algorithm with the exact one. We demonstrate that our predictive accuracies are sufficiently close to that with exact kernels in Table \ref{tab:SCOP_DD} (bio-sequences) and Table \ref{tab:Music_Accuracy} (music). The parameters used for reporting classification performance are chosen in order to maintain comparability with previous studies. Similarly all measurements are made as in \cite{kuksa2009scalable,kuksa2012generalized}, for instance for music genre classification we report results of $10$-fold cross-validation (see Table \ref{tab:datasets}). For our algorithm we used $B=300$ and $\sigma = 0.5$ and we take an average of performances over three independent runs.  
\begin {table}[H]
\vskip-.1in
\centering\caption {Classification performance comparisons on SCOP (ROC) and Ding-Dubchak (Accuracy)} \label{tab:SCOP_DD}
\begin{tabular}{|c||cc|cc||cc|} \hline
& 	\multicolumn{4}{c||}{SCOP} & \multicolumn{2}{c|}{Ding-Dubchak}\\ \hline
& \multicolumn{2}{c|}{Exact} & \multicolumn{2}{c||}{Approx} & Exact & Approx\\ \hline
$k,m$ & 	ROC	&	ROC50	&	ROC	&	ROC50 &\multicolumn{2}{c|}{Accuracy} \\ \hline
$8,2$ & $     88.09     $ & $     38.71     $ & $    88.05      $ & $      38.60 $ &$34.01$ & $31.65$\\  \hline
$10,2$ & $   81.65      $ & $    28.18     $ & $    80.56       $ & $     26.72 $ & $28.1$ & $26.9$\\  \hline
$12,2$ & $   71.31       $ & $   23.27       $ & $   66.93       $ & $   11.04  $ &$27.23$ & $26.66$ \\    \hline
$14,2$ & $   67.91       $ & $   7.78       $ & $    63.67      $ & $    6.66  $  & $25.5$ & $25.5$\\  \hline
$16,2$ & $   64.45       $ & $   6.89       $ & $    61.64      $ & $    5.76  $ &$25.94$ & $25.03$ \\   \hline
$10,5$ & $   91.60       $ & $     53.77     $ & $    91.67     $ & $    54.1     $ & $45.1$ & $43.80$ \\ \hline
$10,7$ & $   90.27      $ & $    48.18     $ & $    90.30      $ & $    48.44     $ & $58.21$ & $57.20$\\ \hline
$12,8$ & $    91.44      $ & $   50.54       $ & $    90.97      $ & $   52.08      $ & $58.21$ &$57.83$ \\ \hline
\end{tabular}
\vskip-.07in
\end{table}
\begin{table}[H]
	\vskip-.15in
\centering\caption {Classification error comparisons on music datasets exact and estimated kernels} \label{tab:Music_Accuracy}
\begin{tabular}{|l||l|l|l|l|l|l|l|l|} \hline
	& \multicolumn{2}{c|}{Music Genre} & \multicolumn{2}{c|}{ISMIR} & \multicolumn{2}{c|}{Artist20}  \\ \hline
	$k,m	$ & 	Exact	&	Estimate	&	Exact	&	Estimate	&	Exact	&	Estimate \\ \hline
	$10,2$ & $ 61.30\pm3.3 $ & $ 61.30\pm3.3 $ & $ 54.32\pm1.6 $ & $ 54.32\pm1.6 $ & $ 82.10\pm2.2 $ & $ 82.10\pm2.2 $ \\ \hline
	$14,2$ & $ 71.70\pm3.0 $ & $	 71.70\pm3.0 $ & $ 55.14\pm1.1 $ & $ 55.14\pm1.1 $ & $ 86.84\pm1.8 $ & $ 86.84\pm1.8 $ \\ \hline
	$16,2$ & $ 73.90\pm1.9 $ & $	 73.90\pm1.9 $ & $ 54.73\pm1.5 $ & $ 54.73\pm1.5 $ & $ 87.56\pm1.8          $ & $  87.56\pm1.8 $ \\ \hline
	$10,7$ & $37.00\pm3.5$ & $ 37.00\pm3.5 $ & $       $ & $ 27.16\pm1.6 $ & $ 55.75\pm4.7          $ & $55.75\pm4.7$ \\ \hline
	$12,6$ & $ 54.20\pm2.7		$ & $ 54.13\pm2.9 $ & $52.12\pm2.0		$ & $ 52.08\pm1.5 $ & $ 79.57\pm2.4$          & $ 80.00\pm2.6$ \\ \hline
	$12,8$ & $43.70\pm3.2		$ & $ 44.20\pm3.2 $ & $47.03\pm	2.6	$ & $ 47.41\pm2.4 $ & $           $ & $67.57\pm3.6$ \\\hline
\end{tabular}
\vskip-.06in
\end{table}
\section{Conclusion}
In this work we devised an efficient algorithm for evaluation of string kernels based on inexact matching of subsequences ($k$-mers). We derived a closed form expression for the size of intersection of $m$-mismatch neighborhoods of two $k$-mers. Another significant contribution of this work is a novel statistical estimate of the number of $k$-mer pairs at a fixed distance between two sequences. Although large values of the parameters $k$ and $m$ were known to yield better classification results, known algorithms are not feasible even for moderately large values. Using the two above mentioned results our algorithm efficiently approximate kernel matrices with probabilistic bounds on the accuracy. Evaluation on several challenging benchmark datasets for large $k$ and $m$, show that we achieve state of the art classification performance, with an order of magnitude speedup over existing solutions.
\bibliographystyle{abbrv}
\bibliography{bibliography_CameraReady} 
\end{document}